\providecommand{\U}[1]{\protect\rule{.1in}{.1in}}
\newtheorem{theorem}{Theorem}[section]
\newtheorem{corollary}[theorem]{Corollary}
\newtheorem{definition}[theorem]{Definition}
\newtheorem{hypothesis}[theorem]{Hypothesis}
\numberwithin{equation}{section}
\begin{document}
\title[KdV equation]{The inverse scattering transform for weak Wigner-von Neumann type potentials}
\author{Sergei Grudsky}
\address{Departamento de Matematicas, CINVESTAV del I.P.N. Aportado Postal 14-740,
07000 Mexico, D.F., Mexico.}
\email{grudsky@math.cinvestav.mx.}
\author{Alexei Rybkin}
\address{Department of Mathematics and Statistics, University of Alaska Fairbanks, PO
Box 756660, Fairbanks, AK 99775}
\email{arybkin@alaska.edu}
\thanks{The first author is supported by CONACYT grant 61517/2020.}
\thanks{The second author is supported in part by the NSF under grant DMS 1716975.}
\date{February 2022}
\subjclass{34B20, 37K15, 47B35}
\keywords{Inverse scattering transform, Wigner-von Neumann potential, KdV equation,
Hankel operators}

\begin{abstract}
In the context of the Cauchy problem for the Korteweg-de Vries equation we
extend the inverse scattering transform to initial data that behave at plus
infinity like a sum of Wigner-von Neumann type potentials with small coupling
constants. Our arguments are based on the theory of Hankel operators.

\end{abstract}
\dedicatory{In memory of Serguei Naboko.}\maketitle

\section{Introduction}

The present paper is concerned with extension of the inverse scattering
transform (IST) for the Cauchy problem for the Korteweg-de Vries (KdV)
equation
\begin{equation}
\partial_{t}q-6q\partial_{x}q+\partial_{x}^{3}q=0,\label{KdV}%
\end{equation}%
\begin{equation}
q\left(  x,0\right)  =q\left(  x\right)  ,\label{KdVID}%
\end{equation}
with long-range initial profiles $q$. In our \cite{GruRybSIMA15} we have
disposed of essentially any decay condition\footnote{Which means that
(\ref{KdV}) does not require a boundary condition at $-\infty$ .} on $q$ at
$-\infty$ but still required short-range decay at $+\infty$. This should not
come as a surprise since the KdV is a strongly unidirectional equation
(solitons run to the right and radiation waves run to the left) which should
translate into different contributions from the behavior of the data $q$ at
$\pm\infty$ . In the present paper we put forward a general framework which is
robust enough to deal with long range decay at $+\infty$. More specifically,
we consider decay that is slow enough to include physically relevant
Wigner-von Neumann (WvN) resonances (also referred to as spectral
singularities) but those resonances are in a way weak (spectral singularities
of low order). We note that a WvN resonance could be an embedded eigenvalue as
was first shown by Wigner-von Neumann in their seminal paper \cite{WvN1929}.
We refer the interested reader to recent \cite{Kruger12,Lukic13} and the
extensive literature on embedded spectra cited therein. Note that under our
assumptions (see Hypothesis \ref{hyp1.1}) embedded eigenvalues do not appear
and the conditions of Hypothesis \ref{hyp1.1} are dictated by what we know
about WvN resonances (and long-range potentials as a whole). Our conditions
can be relaxed as we fill in the gaps in our understanding of long-range
scattering (discussed further in section \ref{apps}). Of course the spectrum
in the long-range situation can be so rich and complicated that it is too long
of a shot to understand it to our complete satisfaction.

To fix our notation we give a brief review of the classical IST. Recall (see,
e.g. \cite{AC91, March86,NovikovetalBook}) that conceptually the IST is
similar to the Fourier transform and consists of\ three steps:

\textbf{Step 1. }(direct transform)
\[
q\left(  x\right)  \longrightarrow S_{q}\text{,}%
\]
where $S_{q}$ is a new set of variables which turns (\ref{KdV}) into a simple
first order linear ODE for $S_{q}(t)$ with the initial condition
$S_{q}(0)=S_{q}$.

\textbf{Step 2. }(time evolution)
\[
S_{q} \longrightarrow S_{q} \left( t\right) \text{.}%
\]

\textbf{Step 3. }(inverse transform)
\[
S_{q} \left( t\right)  \longrightarrow q (x ,t)\text{.}%
\]

Similar methods have also been developed for many other evolution nonlinear
PDEs, which are referred to as completely integrable \ Each of steps 1-3
involves solving a linear integral equation that allows us to analyze
integrable systems at the level unreachable by neither direct numerical
methods nor standard PDE techniques.

This formalism works smoothly in two basic cases (we refer to as classical):
when
\begin{equation}
\int_{\mathbb{R}}\left(  1+\left\vert x\right\vert \right)  q\left(  x\right)
\mathrm{d}x<\infty\text{ (the short-range case)}\label{sr}%
\end{equation}
and when $q$ is periodic. We will only be concerned with the
former\footnote{For the latter we refer to \cite{GesHold03}.} which goes as
follows. Associate with $q$ the full line (self-adjoint) Schrodinger operator
$\mathbb{L}_{q}=-\partial_{x}^{2}+q(x)$. For its spectrum $\sigma\left(
\mathbb{L}_{q}\right)  $ we have%
\[
\sigma\left(  \mathbb{L}_{q}\right)  =\sigma_{d}\left(  \mathbb{L}_{q}\right)
\cup\sigma_{ac}\left(  \mathbb{L}_{q}\right)  \text{,}%
\]
where the discrete component $\sigma_{d}\left(  \mathbb{L}_{q}\right)
=\{-\kappa_{n}^{2}\}$ is finite and for the absolutely continuous one has
$\sigma_{ac}\left(  \mathbb{L}_{q}\right)  =[0,\infty)$. There is no singular
continuous spectrum. The Schr{\"{o}}dinger equation
\begin{equation}
\mathbb{L}_{q}\psi=k^{2}\psi\label{SE}%
\end{equation}
has two (linearly independent) Jost solutions $\psi_{\pm}(x,k)$, i.e.
solutions satisfying%
\begin{equation}
\psi_{\pm}(x,k)=\mathrm{e}^{\pm\mathrm{i}kx}+o(1),\;\partial_{x}\psi_{\pm
}(x,k)\mp\mathrm{i}k\psi_{\pm}(x,k)=o(1),\text{\ \ }x\rightarrow\pm
\infty.\label{Jost solutions}%
\end{equation}
Since $q$ is real, $\overline{\psi}_{+}$ also solves (\ref{SE}) and one can
easily see that the pair $\{\psi_{+},\overline{\psi}_{+}\}$ forms a
fundamental set for (\ref{SE}). Hence $\psi_{-}$ is a linear combination of
$\{\psi_{+},\overline{\psi}_{+}\}$. We write this fact as follows
($k\in\mathbb{R}$)%
\begin{equation}
T(k)\psi_{-}(x,k)=\overline{\psi_{+}(x,k)}+R(k)\psi_{+}(x,k),\text{ (basic
scattering identity)}\label{R basic scatt identity}%
\end{equation}
where $T$ and $R$ are called the\emph{ transmission} and (right)\emph{
reflection} coefficient respectively. The identity
(\ref{R basic scatt identity}) is totally elementary but serves as a basis for
inverse scattering theory. As is well-known (see, e.g. \cite{March86}), the
triple%
\begin{equation}
S_{q}=\{R,(\kappa_{n},c_{n})\},\label{SD}%
\end{equation}
where $c_{n}=\left\Vert \psi_{+}(\cdot,\mathrm{i}\kappa_{n})\right\Vert ^{-1}%
$, determines $q$ uniquely and is called the scattering data for
$\mathbb{L}_{q}$ . Computing $S_{q}$ ends step 1. Note that this set can be
constructed for any $q\in L^{1}$ with possibly infinitely many $(\kappa
_{n},c_{n})$.

Step 2 is computationally easy. Lax pair considerations readily yield
\begin{equation}
S_{q}(t)=\left\{  R(k)\exp\left(  8\mathrm{i}k^{3}t\right)  ,\;\kappa
_{n},c_{n}\exp\left(  8\kappa_{n}^{3}t\right)  \right\}  .\label{time evol}%
\end{equation}
Step 3 amounts to solving the inverse scattering problem of recovering the
potential $q\left(  x,t\right)  $ (which now depends on $t\geq0$ ) from
$S_{q}(t)$ via any of the three main methods: the Gelfand-Levitan-Marchenko
equation, the trace formula, and the Riemann-Hilbert problem (put in the
historic order).

We emphasize that while steps 1-2 remain valid for essentially any real $q$
supporting Jost solutions $\psi_{\pm}\left(  x,k\right)  $, step 3 breaks down
in general for long-range potentials (i.e potentials for which (\ref{sr}) does
not hold). In fact, the latter occurs already in the case of $q\left(
x\right)  =O\left(  x^{-2}\right)  ,x\rightarrow\pm\infty$ (i.e. slightly
worse than (\ref{sr})). The first example to this effect is explicitly
constructed in \cite{ADM81}. The authors present two distinct potentials with
such decay and no bound states which share the same reflection coefficient.
Thus we have non-uniqueness in solving step 3. The reason for the loss of
uniqueness is not explained in \cite{ADM81} but it seems plausible that an
approximation of the Jost solution by a suitable Bessel function would capture
the singular behavior of $\psi_{\pm}\left(  x,k\right)  $ near $k=0$ (the edge
of the a.c. spectrum) which in turn might suggest what pieces of data need to
be added to $S_{q}$ to resolve the issue\footnote{We emphasize that $x^{-2}$
decay may produce infinite (negative) discrete spectrum.} (see
\cite{AktKlaus93} for some relevant results). For potentials $q\left(
x\right)  =O\left(  x^{-\alpha}\right)  ,1<\alpha<2$, as is shown in
\cite{Yafaev81}, even nice potentials have the Jost solution with an erratic
behavior at $k=0$. In some terminology, real points of discontinuity of the
Jost solution are referred to as \emph{spectral singularities}. Since the
absence of such points is one of the main assumptions in the inverse
scattering method, it is reasonable to link non-uniqueness in the case of
$q\left(  x\right)  =O\left(  x^{-\alpha}\right)  ,1<\alpha\leq2$, to a
spectral singularity at $k=0$. $\text{{}}$Note that even though $k=0$ is the
only possible spectral singularity for a summable potential it is a good open
problem \cite{AK01} to find the extra data that restore uniqueness. If
$q\notin L^{1}$ then the situation becomes even worse as (\ref{SE}) need not
have a Jost solution and all steps 1-3 fail in general. Extending the inverse
scattering procedure to such potentials is currently out of reach and any
essential step towards its solution is important.

Our aim here is to identify a class of $L^{2}$ non-integrable potentials
supporting a rich set of spectral singularities which nevertheless can be
uniquely restored by the classical scattering data (\ref{SD}). We came across
this class while studying its well-known representatives, oscillatory
potentials of the form
\begin{equation}
q_{\gamma}\left(  x\right)  =(A/x)\sin2\omega x,\text{\ \ \ }\gamma
:=\left\vert A/\left(  4\omega\right)  \right\vert ,\label{pure WvN}%
\end{equation}
typically referred to as WvN type potentials. The scattering theory can be
developed along the same lines with its short-range counterpart except for the
fact the points $k=\pm\omega$ are spectral singularities of order $\gamma$
\cite{Klaus91}. In the WvN potential community the point $\omega^{2}$ is
called a \emph{WvN resonance}. In our recent \cite{RyNON21} we explicitly
construct three potentials with the same (single) resonance $\omega^{2}$ and
the same set (\ref{SD}). Note that $k=0$ is not a spectral singularity in
those examples meaning that WvN resonances may also cause non-uniqueness. The
main goal of our note is to show that for small $\gamma$ (weak coupling
constant) non-uniqueness may only come from a spectral singularity at zero.

Our interest in WvN potentials is inspired in part by the work of Matveev (see
\cite{Matveev2002} and the literature cited therein) and his proposal
\cite{MatveevOpenProblems}: "A very interesting unsolved problem is to study
the large time behavior of the solutions to the KdV equation corresponding to
the smooth initial data like $cx^{-1}\sin2kx$, $c\in\mathbb{R}$. Depending on
the choice of the constant $c$ the related Schr{\"{o}}dinger operator might
have finite or infinite or zero number of the negative eigenvalues. The
related inverse scattering problem is not yet solved and the study of the
related large times evolution is a very challenging problem."

In the present paper we deal with initial data (\ref{KdVID}) subject to

\begin{hypothesis}
\label{hyp1.1} Let $q$ be a real locally integrable function subject to

\begin{enumerate}
\item[(1)] (decay) $\mathbb{L}_{q}$ supports Jost solutions $\psi_{\pm}$ such
that\footnote{Here $H^{2}$ stands for the Hardy space of the upper half plane
(see Section \ref{hankel})}%
\[
Y_{\pm}(x,k):=\mathrm{e}^{\mp\mathrm{i}kx}\psi_{\pm}(x,k+\mathrm{i}0)-1\in
H^{2};
\]

\end{enumerate}

\begin{enumerate}
\item[(2)] (piecewise continuity of the reflection coefficient) $R$ is
piecewise continuous on $\mathbb{R}\setminus\left\{  \pm\omega_{j}\right\}  $
and has a jump discontinuity at each $\pm\omega_{j}$;

\item[(3)] (jump size)
\[
\sup_{j}\frac{1}{2} \left\vert R \left(  \pm\omega_{j} +0\right)  -R \left(
\pm\omega_{j} -0\right) \right\vert <1 ;
\]

\item[(4)] ( $L^{2}$ type decay) $R \in L^{2} ;$

\item[(5)] (discrete spectrum) $\sigma_{d} \left( \mathbb{L}_{q}\right) $ is finite.
\end{enumerate}
\end{hypothesis}

The set of such potentials is quite broad but its description in terms of $q$
alone is out of reach. It can however be describes (or perhaps even
characterized) via the spectral measure of the underlying Schrodinger operator
$\mathbb{L}_{q}$. Such descriptions though are rarely useful in the IST
context as steps 1-3 stated in terms of spectral measure work well only in the
setting of periodic potentials and some of their generalizations (see the
recent \cite{Binderetal2018} for the spectral approach to almost periodic
potentials). For this reason we only outline here how it can be done. The
details will be provided elsewhere in a more general case.

Recall that the spectral measure (or rather matrix) of $\mathbb{L}_{q}$ can be
obtained in terms of the (scalar) spectral measures $\rho_{\pm}$ associated
with two half-line Schrodinger operators with (say) a Dirichlet boundary
condition at (say) zero (see e.g. \cite{Titchmarsh62}). In the context of
$L^{2}$ potentials (our potentials are $L^{2}$) such measures are completely
characterized in the beautiful paper \cite{KillipSimon2008}. It is important
that three out of four conditions in this characterization follow from the
Zakharov-Faddeev trace formula (\ref{trace formula}) below, which is extended
in \cite{KillipSimon2008} to $L^{2}$ potentials, all conditions being very
mild. On the other hand, detailed descriptions of the the Jost solution, the
Weyl m-function, and the spectral measure\footnote{These functions are of
course closely related.} of the half-line (Dirichlet) Schrodinger operator
with WvN type potentials (and their sums, finite or even infinite) are given
in \cite{Hintonetal1991, Klaus91} (see section \ref{apps}). This way we
construct two spectral measures $\rho_{\pm}\left(  E\right)  $ with a
desirable behavior for $E>\varepsilon>0$. It remains to prescribe a "correct"
behavior of $\rho_{\pm}\left(  E\right)  $ for $E<\varepsilon$. This is not
done in \cite{Hintonetal1991, Klaus91} but it is where the Killip-Simon
characterization \cite{KillipSimon2008} comes in handy as it allows us to
assign only finitely many pure negative points (eigenvalues) and a desirable
behavior at $E=0$, all within $L^{2}$ potentials (and without altering
$\rho_{\pm}\left(  E\right)  $ on $\left(  \varepsilon,\infty\right)  $).
Consequently, we now have two half-line Weyl m-functions $m_{\pm}$ and hence
the ($2\times2$) Weyl matrix $M$ which representing measure is the spectral
measure of $\mathbb{L}_{q}$ with all conditions of Hypothesis \ref{hyp1.1}
satisfied. Then one can find $q$ on each $\mathbb{R}_{\pm}$ from $\rho_{\pm}$
via the inverse spectral method of Gelfand-Levitan-Marchenko \cite{March86} or
the whole $q$ via its full line adaptation \cite{AK01}. As was mentioned
above, this however does not yield a convenient description of the conditions
in Hypothesis \ref{hyp1.1} in terms of $q$ itself.

\begin{theorem}
[Main Theorem]\label{MainThm} Under Hypothesis \ref{hyp1.1} the data $S_{q}$
given by (\ref{SD}) determine $q$ uniquely.
\end{theorem}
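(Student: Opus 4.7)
The plan is to convert reconstruction of $q$ from $S_{q}$ into solvability of an operator equation $(I+\mathbf{H}_{x})f=g_{x}$ on the Hardy space $H^{2}$ and to deduce uniqueness from injectivity of $I+\mathbf{H}_{x}$ for every $x\in\mathbb{R}$. First, I would write the right Marchenko integral equation at the base point $x$ in this form, with $\mathbf{H}_{x}$ realized as the Hankel operator on $H^{2}$ with symbol $R_{x}(k):=R(k)\,\mathrm{e}^{2\mathrm{i}kx}$, augmented by a finite-rank term carrying the bound-state contributions $(\kappa_{n},c_{n})$, and with $g_{x}$ determined entirely by $S_{q}$. Condition (4) makes $R_{x}\in L^{2}$ so that $\mathbf{H}_{x}$ is bounded via Nehari-type estimates, and condition (5) makes the finite-rank piece genuinely finite-dimensional. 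The potential $q$ is then recovered from $f$ via the trace formula in the $L^{2}$ form of Zakharov--Faddeev (cf.\ \cite{KillipSimon2008}), so uniqueness of $q$ reduces to injectivity of $I+\mathbf{H}_{x}$ for every $x$.

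Next, I would establish that $I+\mathbf{H}_{x}$ is Fredholm of index $0$. By condition (2) the symbol $R_{x}$ is piecewise continuous on $\mathbb{R}$ with jumps only at the images of $\pm\omega_{j}$, and by condition (3) each half-jump is strictly bounded by $1$. Invoking S.~Power's description of the essential spectrum of Hankel operators with piecewise continuous symbol, this forces $\|\mathbf{H}_{x}\|_{\mathrm{ess}}<1$; since a finite-rank perturbation does not alter the essential spectrum, $I+\mathbf{H}_{x}$ is Fredholm of index $0$, and invertibility collapses to triviality of the kernel.

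Finally, I would rule out a nontrivial kernel. An element $f\in\ker(I+\mathbf{H}_{x})$ would, after unwinding the Marchenko transform, furnish a formal solution of $\mathbb{L}_{q}\psi=k^{2}\psi$ whose boundary behavior must be compatible with the Jost asymptotics enforced by condition (1); the membership $Y_{\pm}(x,\cdot)\in H^{2}$, combined with the scattering identity $|T|^{2}+|R|^{2}=1$ that holds a.e.\ on $\mathbb{R}$, then forces $f\equiv 0$. The main obstacle is precisely this last step: the spectral singularities at $\pm\omega_{j}$ create arcs in the essential spectrum of $\mathbf{H}_{x}$ that could in principle graze the critical value $-1$, and only the strict smallness imposed by condition (3) keeps these arcs away from it; simultaneously, condition (1) is what prevents the singular behavior of $\psi_{\pm}$ at $\pm\omega_{j}$ from conspiring to produce phantom kernel elements.
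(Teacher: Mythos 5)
Your overall architecture coincides with the paper's: reduce reconstruction to the equation $(\mathbb{I}+\mathbb{H}(\varphi_{x}))Y=-\mathbb{H}(\varphi_{x})1$ on $H^{2}$ with $\varphi_{x}=R_{x}-\sum_{n}\mathrm{i}c_{x,n}/(\cdot-\mathrm{i}\kappa_{n})$, use Power's theorem and condition (3) to place $\sigma_{ess}(\mathbb{H}(R_{x}))$ inside $\cup_{j}[-\alpha_{j},\alpha_{j}]$ with $\sup_{j}\alpha_{j}<1$, and then invoke the Fredholm alternative. The genuine gap is in your final step, which is the crux of the whole theorem: you do not actually rule out $-1$ as an eigenvalue. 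Your plan to ``unwind the Marchenko transform'' and argue that a kernel element would produce a solution of $\mathbb{L}_{q}\psi=k^{2}\psi$ incompatible with the Jost asymptotics is not an argument — it is precisely in this long-range setting, with spectral singularities at $\pm\omega_{j}$, that such asymptotic reasoning is unavailable, and nothing in your sketch explains how condition (1) would ``force $f\equiv0$.'' The paper's mechanism is entirely different and purely operator-theoretic, and it has two ingredients you are missing. First, the bound-state part is disposed of not by a Fredholm perturbation argument but by an explicit positivity computation: for $c,\kappa>0$ one has $\langle\mathbb{H}(-\mathrm{i}c/(\cdot-\mathrm{i}\kappa))f,f\rangle=2\pi c\left\vert f(\mathrm{i}\kappa)\right\vert^{2}\geq0$ by the Cauchy formula, so the whole question reduces to positive definiteness of $\mathbb{I}+\mathbb{H}(R_{x})$. (Your version, which keeps the finite-rank term inside the operator whose kernel must be shown trivial, leaves you with no tool to handle it at the eigenvalue stage.) Second, if $\mathbb{H}(R_{x})f=-f$ with $\Vert f\Vert=1$, then $\langle\mathbb{H}(R_{x})f,f\rangle=\langle R_{x}f,\mathbb{J}f\rangle$, and Cauchy--Schwarz gives
\[
\left\vert \langle\mathbb{H}(R_{x})f,f\rangle\right\vert ^{2}\leq\int_{\mathbb{R}}\left\vert R(k)\right\vert ^{2}\left\vert f(k)\right\vert ^{2}\mathrm{d}k<\Vert f\Vert_{2}^{2}=1\text{,}
\]
the strict inequality holding because $\left\vert R\right\vert\leq1$ (from $\left\vert T\right\vert^{2}+\left\vert R\right\vert^{2}=1$), $\left\vert R\right\vert<1$ on a set of positive measure, and a nonzero $H^{2}$ function cannot vanish on a set of positive measure. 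This contradicts $\langle\mathbb{H}(R_{x})f,f\rangle=-1$ and finishes the proof; note that conditions (2)--(3) enter only through Power's theorem, while condition (1) enters much earlier, to justify that $Y\in H^{2}$ and that the residue subtraction puts the left-hand side of the scattering identity into $H^{2}$ — not to exclude ``phantom kernel elements'' as you suggest.

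Two smaller inaccuracies: $q$ is not recovered from $Y$ ``via the trace formula'' — the Zakharov--Faddeev identity is a sum rule, not a reconstruction formula; the paper recovers $q$ from $Y$ through the Schr\"{o}dinger equation or the large-$k$ asymptotics $q(x)=-\partial_{x}\lim 2\mathrm{i}kY(x,k)$. And boundedness of $\mathbb{H}(R_{x})$ comes from $R\in L^{\infty}$ (a consequence of $\left\vert R\right\vert\leq1$), not from $R\in L^{2}$; condition (4) is used to make sense of the right-hand side $\mathbb{H}(\varphi_{x})1=\mathbb{P}_{+}\overline{\varphi_{x}}\in H^{2}$.
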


Thus, the principle value of Theorem \ref{MainThm} is that it takes into
account the effect of nonzero resonances (spectral singularities) in the
inverse scattering problem. This theorem can be applied to such physically
interesting cases as certain potentials supporting finitely many WvN
resonances with $\gamma<1/2$ or some potentials of the form $q \left(
x\right)  =p \left( x\right) /x$ where $p \left( x\right) $ is a periodic
function with a zero mean. However, as we discussed above any explicit
description of our class (i.e. in terms of $q$) is out of reach unless we have
a suitable description of $\psi_{ \pm} \left( x ,k\right) $ as $k
\rightarrow0$.

We follow standard notation accepted in Analysis: $\mathbb{R}$ is the real
line, $\mathbb{R}_{\pm}=(0,\pm\infty)$, $\mathbb{C}$ is the complex plane,
$\mathbb{C}^{\pm}=\left\{  z\in\mathbb{C}:\pm\operatorname*{Im}z>0\right\}  $.
$\overline{z}$ is the complex conjugate of $z$. Besides number sets, black
board bold letters will also be used for (linear) operators. In particular,
$\mathbb{I}$ denotes the identity operator. We write $f\left(  x\right)  \sim
g\left(  x\right)  ,x\rightarrow x_{0}\text{,}$ ( $x_{0}$ may be infinite) if
$\lim\left(  f\left(  x\right)  -g\left(  x\right)  \right)  =0,x\rightarrow
x_{0}$.

The paper is organized as follows. In Section \ref{hankel} we give a brief
introduction to the theory of Hankel operators and state explicitly what will
be used. In Section \ref{proof} we give a detailed proof of the main theorem.
Section \ref{apps} is devoted to applications of the main theorem to WvN type
potentials. In the final Section \ref{RHP section} we apply the main theorem
to the analytic factorization problem of the Riemann-Hilbert problem arising
in the IST for the KdV equation.

\section{Hankel Operators, basic definitions and important facts
\label{hankel}}

Our approach is based upon techniques of the Hankel operator. Since the Hankel
operator is not a conventional tool in inverse problems for the reader's
convenience we give a brief introduction to the theory of Hankel operators and
statements of what will be used.

Recall (see e.g. \cite{Garnett}) that a function $f$ analytic in $\mathbb{C}^{
\pm}$ is in the Hardy space $H^{2} \left( \mathbb{C}^{ \pm}\right) $ if
\[
\Vert f\Vert_{2}^{2}:=\sup_{y >0}\int_{\mathbb{R}}\left\vert f (x
\pm\mathrm{i} y)\right\vert ^{2} \mathrm{d} x <\infty\text{.}%
\]
We set $H^{2} =H^{2} \left( \mathbb{C}^{ +}\right) $ . It is a fundamental
fact that $f \left( z\right)  \in H^{2} \left( \mathbb{C}^{ \pm}\right) $ has
non-tangential boundary values $f \left( x \pm\mathrm{i} 0\right) $ for almost
every (a.e.) $x \in\mathbb{R}$ and $H^{2} \left( \mathbb{C}^{ \pm}\right) $
are Hilbert spaces with the inner product induced from $L^{2}$ :
\[
\langle f ,g \rangle_{H_{ \pm}^{2}} = \langle f ,g \rangle=\int_{\mathbb{R}}f
\left( x\right)  \bar{g} \left( x\right)  \mathrm{d} x\text{.}%
\]
It is well-known that $L^{2} =H^{2} \left( \mathbb{C}^{ +}\right)  \oplus
H^{2} \left( \mathbb{C}^{ -}\right) \text{,}$ the orthogonal (Riesz)
projection $\mathbb{P}_{ \pm}$ onto $H^{2} \left( \mathbb{C}^{ \pm}\right) $
being given by
\begin{equation}
(\mathbb{P}_{ \pm} f) (x) = \pm\frac{1}{2 \pi\mathrm{i}} \int_{\mathbb{R}%
}\frac{f (s) \mathrm{d} s}{s -(x \pm\mathrm{i} 0)} .\label{eq3.1}%
\end{equation}

A Hankel operator is an infinitely dimensional analog of a Hankel matrix, a
matrix whose $(j ,k)$ entry depends only on $j +k$ . In the context of
integral operators the Hankel operator is usually defined as an integral
operator on $L^{2} (\mathbb{R}_{ +})$ whose kernel depends on the sum of the
arguments
\begin{equation}
(\mathbb{H} f) (x) =\int_{0}^{\infty}h (x +y) f (y) \mathrm{d} y ,\;f \in
L^{2} (\mathbb{R}_{ +}) ,\;x \geq0 ,\label{eq4.10}%
\end{equation}
and it is this form that Hankel operators typically appear in the inverse
scattering formalism. We however consider Hankel operators on $H^{2}$ (c.f.
\cite{Nik2002,Peller2003}).

Let $(\mathbb{J} f) (x) =f ( -x)$ be the reflection operator in $L^{2}$ . It
is clearly an isometry with the obvious property
\begin{equation}
\mathbb{J} \mathbb{P}_{ \mp} =\mathbb{P}_{ \pm} \mathbb{J}\label{eq4.9}%
\end{equation}

\begin{definition}
[Hankel operator]\label{def4.1} Let $\varphi\in L^{\infty}$ . The operator
$\mathbb{H} (\varphi)$ defined by
\begin{equation}
\mathbb{H} (\varphi) f =\mathbb{J} \mathbb{P}_{ -} (\varphi f) ,\;f \in H^{2}
,\label{eq4.1}%
\end{equation}
is called the Hankel operator with the symbol $\varphi$ .
\end{definition}

It immediately follows from the definition and (\ref{eq4.9}) that
$\Vert\mathbb{H} (\varphi)\Vert\leq\Vert\varphi\Vert_{\infty}$ and if
$\mathbb{J} \varphi=\overline{\varphi}$ then $\mathbb{H} (\varphi)$ is
selfadjoint on $H^{2}$ . A more subtle statement (Hartman's theorem) says that
$\mathbb{H} (\varphi)$ is compact iff $\varphi=\varphi_{1} +\varphi_{2}$ where
$\varphi_{1}$ is a function continuous on the closed real line and
$\varphi_{2}$ is analytic and uniformly bounded in the upper half plane.
However if $\varphi$ is piecewise continuous with jump discontinuities then
$\mathbb{H} (\varphi)$ is no longer compact. The following deep theorem
\cite{Power1982} plays a crucial role in our consideration.

\begin{theorem}
[Power, 1978]\label{Power thm} Let $\mathbb{J}\varphi=\bar{\varphi}$ (i.e.
$\mathbb{H}(\varphi)$ is selfadjoint), $\varphi$ decay at $\pm\infty$. Suppose
that $\varphi$ is piecewise continuous away from some points $\left\{
\pm\omega_{j}\right\}  $ (may be infinitely many) and
\[
\alpha_{j}:=\frac{1}{2}\left\vert \varphi\left(  \omega_{j}+0\right)
-\varphi\left(  \omega_{j}-0\right)  \right\vert
\]
(assuming that the limits exist and are different). Then for the essential
spectrum of $\mathbb{H}(\varphi)$ we have
\[
\sigma_{ess}\left(  \mathbb{H}(\varphi)\right)  =\cup_{j}\left[  -\alpha
_{j},\alpha_{j}\right]  \text{.}%
\]

\end{theorem}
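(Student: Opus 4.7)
The plan is to prove Theorem \ref{Power thm} in three stages: reduce the problem modulo compact operators to a direct sum of elementary Hankel blocks, each carrying a single pair of jumps at $\pm\omega_j$; compute the essential spectrum of one such block explicitly; and assemble the blocks by an essential-orthogonality argument. To carry out the reduction, for each $j$ I would choose a symmetric model symbol $\varphi_{j}$ satisfying $\mathbb{J}\varphi_{j}=\bar\varphi_{j}$, continuous off $\{\pm\omega_{j}\}$, decaying at $\pm\infty$, and having exactly the same jumps as $\varphi$ at $\pm\omega_{j}$. A convenient concrete choice is a scaled translate of the antisymmetric function $k\mapsto\tfrac{i}{\pi}\log\tfrac{k-i}{k+i}$, whose jump at $0$ is normalized. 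The remainder $\varphi-\sum_{j}\varphi_{j}$ is then continuous on the closed real line and vanishes at $\pm\infty$, so by Hartman's theorem (stated just before Theorem \ref{Power thm}) $\mathbb{H}(\varphi-\sum_{j}\varphi_{j})$ is compact, and hence
\[
\sigma_{ess}(\mathbb{H}(\varphi))=\sigma_{ess}\Bigl(\textstyle\sum_{j}\mathbb{H}(\varphi_{j})\Bigr).
\]

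For a single elementary block I would show $\sigma_{ess}(\mathbb{H}(\varphi_{j}))=[-\alpha_{j},\alpha_{j}]$ by explicit diagonalization. Via the integral representation (\ref{eq4.10}), $\mathbb{H}(\varphi_{j})$ is unitarily equivalent to an integral operator on $L^{2}(\mathbb{R}_{+})$ whose kernel $h_{j}(x+y)$ is the Fourier transform of $\varphi_{j}$, exhibiting a Cauchy-type singularity at the origin modulated by $e^{i\omega_{j}(x+y)}$. A unitary gauge transformation removes the modulation, and a subsequent reduction brings the operator to scaling-homogeneous form; the Mellin transform then diagonalizes it as multiplication by an explicit bounded function of the Mellin variable whose essential range is exactly $[-\alpha_{j},\alpha_{j}]$. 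Equivalently, a Cayley transform recasts the problem on the disk, where it becomes the classical single-step Hankel model treated in \cite{Power1982}.

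To conclude that $\sigma_{ess}(\sum_{j}\mathbb{H}(\varphi_{j}))=\bigcup_{j}[-\alpha_{j},\alpha_{j}]$, I would establish essential orthogonality of the blocks. After absorbing a further compact error via Hartman's theorem, each $\varphi_{j}$ may be chosen supported in an arbitrarily small neighborhood of $\{\pm\omega_{j}\}$, so that $\varphi_{j}$ and $\varphi_{k}$ have disjoint supports for $j\neq k$. A short computation based on (\ref{eq4.1}) and (\ref{eq4.9}) then shows that $\mathbb{H}(\varphi_{j})\mathbb{H}(\varphi_{k})$ is compact whenever $j\neq k$. Modulo the compact ideal, the self-adjoint operator $\sum_{j}\mathbb{H}(\varphi_{j})$ is therefore an orthogonal direct sum of its elementary summands, whose essential spectrum is the closure of $\bigcup_{j}[-\alpha_{j},\alpha_{j}]$, as required.

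The main obstacle I anticipate is the case of infinitely many jump points in the third step: one needs a uniform $L^{\infty}$-smallness estimate for the tails $\sum_{j>N}\varphi_{j}$ together with uniform control of the compact cross-term errors, which forces a quantitatively chosen family $\{\varphi_{j}\}$ and a careful limiting procedure in the Calkin algebra. The single-block computation in the second step is conceptually the most substantial but follows the classical Mellin diagonalization of \cite{Power1982}, while the first step is essentially bookkeeping using Hartman's theorem.
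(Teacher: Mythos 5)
The paper does not prove this statement: Theorem \ref{Power thm} is imported as a known result of Power, with the explicit remark that the version in \cite{Power1982} is more general. There is therefore no in-paper argument to compare yours against; what follows is an assessment of your sketch on its own terms.

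Your outline is the standard localization route (and essentially the one Power himself follows): split off compact errors by Hartman's theorem, analyze a one-pair model operator, and assemble via essential orthogonality in the Calkin algebra. Three points need repair. First, the model symbol must carry the jumps at $+\omega_{j}$ and $-\omega_{j}$ \emph{simultaneously} and satisfy $\mathbb{J}\varphi_{j}=\bar{\varphi}_{j}$; a single scaled translate of $\tfrac{\mathrm{i}}{\pi}\log\tfrac{k-\mathrm{i}}{k+\mathrm{i}}$ centered at $\omega_{j}$ breaks this symmetry, and it is precisely the coupling of the two points by $\mathbb{J}$ that turns the local contribution into the symmetric interval $\left[-\alpha_{j},\alpha_{j}\right]$ rather than a chord emanating from $0$ as in the general non-selfadjoint theorem; the two jumps cannot be treated as independent blocks. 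Second, the single-block Mellin computation is the entire substance of the theorem, and you defer it to \cite{Power1982} — as a proof this is circular; it is at best a reduction of the general case to the one-pair case. Third, the compactness of $\mathbb{H}(\varphi_{j})\mathbb{H}(\varphi_{k})$ for disjointly supported symbols is not a short computation from (\ref{eq4.1}) and (\ref{eq4.9}); it rests on the Hankel--Toeplitz semicommutator identities and the compactness of Toeplitz semicommutators for piecewise continuous symbols with no common discontinuities. The infinite-jump case you flag yourself; note also that the right-hand side should be the \emph{closed} union, which is all the paper actually needs since it only uses $\pm1\notin\sigma_{ess}$ under $\sup_{j}\alpha_{j}<1$.
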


The actual statement of Theorem \ref{Power thm} given in \cite{Power1982} is
more general.

Note that the Hankel operator $\mathbb{H}$ defined by (\ref{eq4.10}) is
unitary equivalent to $\mathbb{H} (\varphi)$ with the symbol $\varphi$ equal
to the Fourier transform of $h$ . We emphasize though that the form
(\ref{eq4.10}) does not prove to be convenient for our purposes and also $h$
is in general not a function but a distribution. In integral form
(\ref{eq4.10}) Hankel operators appeared naturally already in the classical
papers of Faddeev and Marchenko \cite{March86} on inverse scattering. However,
the well-developed theory of this class of operators (and even the name
itself) was not used at that time. In the KdV context, the Fredholm
determinant of $\mathbb{I} +\mathbb{H}$ appears to be studied for the first
time in \cite{Popper84}. In recent \cite{Blower21,Malham21} some ideas of
\cite{Popper84} were extended far beyond the KdV case.

In the conclusion of this section we mention that the formulation of
Faddeev--Marchenko inverse scattering theory in terms of Hankel operators
defined as in Definition \ref{def4.1} (adjusted to the unit circle) and the
techniques stemming from this theory appeared only in the present century in
\cite{Yuditetal2011,Volbergetal2002}. However, the inverse scattering problem
was studied therein only for Jacobi operators and not in the context of
integrable systems.

\section{Proof of the Main Theorem \label{proof}}

Rewrite (\ref{R basic scatt identity}) in the form%
\begin{equation}
Ty_{-}=\bar{y}_{+}+R_{x}y_{+},\label{eq6.15}%
\end{equation}
where
\begin{equation}
y_{\pm}(x,k):=\mathrm{e}^{\mp\mathrm{i}kx}\psi_{\pm}(x,k),\text{\ \ \ }%
R_{x}\left(  k\right)  :=\mathrm{e}^{2\mathrm{i}kx}R\left(  k\right)
.\label{y}%
\end{equation}
The function $y:=y_{+}$ will be used more frequently. Let us regard
(\ref{eq6.15}) as a Hilbert-Riemann problem of determining $y_{\pm}$ by given
$T,R$ which we solve by Hankel operator techniques. The potential $q$ can then
be easily found from either $y_{\pm}$.

As is well-known, for real $k$ we have
\begin{equation}
T(-k)=\overline{T(k)},\text{\ \ \ }R(-k)=\overline{R(k)},\text{\ \ \ }%
\left\vert T\left(  k\right)  \right\vert ^{2}+\left\vert R\left(  k\right)
\right\vert ^{2}=1,\label{eq6.14}%
\end{equation}
and
\[
T\left(  k\right)  =\prod_{n}\frac{k+\mathrm{i}\kappa_{n}}{k-\mathrm{i}%
\kappa_{n}}\exp\left[  \frac{1}{2\pi\mathrm{i}}\int_{\mathbb{R}}\frac
{\log\left(  1-\left\vert R\left(  s\right)  \right\vert ^{2}\right)
\mathrm{d}s}{s-k}\right]
\]
Due to conditions 1, 4, 5 of Hypothesis \ref{hyp1.1} the function $Ty_{-}$ in
(\ref{eq6.15}) is meromorphic in $\operatorname*{Im}k>0$ with finitely many
simple poles at $\mathrm{i}\kappa_{n}$, and $T\left(  k\right)  \rightarrow
1,k\rightarrow\infty$ , in $\mathbb{C}^{+}$. Compute its residues. It follows
from (\ref{R basic scatt identity}) that we also have
\begin{equation}
T\left(  k\right)  =\frac{2\mathrm{i}k}{W(\psi_{-},\psi_{+})},\label{T}%
\end{equation}
where $W\left(  f,g\right)  :=fg^{\prime}-f^{\prime}g$ is the Wronskians.
Recall that if $k_{0}$ is a zero of $W(\psi_{-},\psi_{+})$ then $\psi
_{+}(x,k_{0})=\mu_{0}\psi_{-}(x,k_{0})$ (linearly dependent) with some
$\mu_{0}\neq0$ , that occurs only for $k_{0}\in\mathrm{i}\mathbb{R}_{+}$ such
that $k_{0}^{2}=-\kappa_{0}^{2}$ , where $-\kappa_{0}^{2}$ is a bound state of
$\mathbb{L}_{q}$ . Next, from the well-known (and easily verifiable) identity
\[
\partial_{k}W(\psi_{-}\left(  x,k\right)  ,\psi_{+}\left(  x,k\right)
)=2k\int_{\mathbb{R}}\psi_{-}\left(  s,k\right)  \psi_{+}\left(  s,k\right)
\mathrm{d}s
\]
one has
\begin{equation}
\left.  \partial_{k}W(\psi_{-}\left(  x,k\right)  ,\psi_{+}\left(  x,k\right)
)\right\vert _{k=\mathrm{i}\kappa_{0}}=2\mathrm{i}\kappa_{0}\mu_{0}^{-1}%
\int_{\mathbb{R}}\psi_{+}(s,\mathrm{i}\kappa_{0})^{2}\mathrm{d}s\text{,}%
\label{Wron rel}%
\end{equation}
which means that $\mathrm{i}\kappa_{0}$ is a simple zero of $W(\psi_{-}%
,\psi_{+})$. It follows from (\ref{Wron rel}) and (\ref{T}) that%
\begin{align*}
\operatorname*{Res}_{k=\mathrm{i}\kappa_{0}}T &  =%
\genfrac{.}{\vert}{}{}{2\mathrm{i}k}{\partial_{k}W(\psi_{-},\psi_{+})}%
_{k=\mathrm{i}\kappa_{0}}=\mathrm{i}\mu_{0}\left(  \int_{\mathbb{R}}\psi
_{+}(s,\mathrm{i}\kappa_{0})^{2}\mathrm{d}s\right)  ^{-1}\\
&  =\mathrm{i}\mu_{0}\Vert\psi_{+}(\cdot,\mathrm{i}\kappa_{0})\Vert_{2}%
^{-2}=\mathrm{i}\mu_{0}c_{0}\text{,}%
\end{align*}
where $c_{0}$ is the (right) norming constant of the bound state $-\kappa
_{0}^{2}$. Therefore,%
\begin{align*}
\operatorname*{Res}_{k=\mathrm{i}\kappa_{n}}T\left(  k\right)  y_{-}\left(
x,k\right)   &  =y_{-}(x,\mathrm{i}\kappa_{n})\operatorname*{Res}%
_{k=\mathrm{i}\kappa_{n}}T\\
&  =\mathrm{i}\mu_{n}y_{-}(x,\mathrm{i}\kappa_{n})c_{n}=\mathrm{i}%
c_{n}\mathrm{e}^{-2\kappa_{n}x}y(x,\mathrm{i}\kappa_{n})\text{,}%
\end{align*}
and taking condition 1 of Hypothesis \ref{hyp1.1} into account one has that
for each fixed $x$
\begin{equation}
T\left(  k\right)  y_{-}\left(  x,k\right)  -1-\sum_{n}\frac{\mathrm{i}%
c_{x},_{n}}{k-\mathrm{i}\kappa_{n}}y(x,\mathrm{i}\kappa_{n})\in H^{2}%
\text{,}\label{LHS}%
\end{equation}
where $c_{x,n}:=c_{n}\mathrm{e}^{-2\kappa_{n}x}$ . Rewrite now (\ref{eq6.15})
in the form%

\begin{align}
&  T\left(  k\right)  y_{-}\left(  x,k\right)  -1-{\sum_{n}}\frac
{\mathrm{i}c_{x,n}}{k-\mathrm{i}\kappa_{n}}y(x,\mathrm{i}\kappa_{n}%
)\nonumber\\
&  =\overline{\left(  y\left(  x,k\right)  -1\right)  }+R_{x}\left(  k\right)
\left(  y\left(  x,k\right)  -1\right)  \nonumber\\
&  +R_{x}\left(  k\right)  -{\sum_{n}}\frac{\mathrm{i}c_{x,n}}{k-\mathrm{i}%
\kappa_{n}}y(x,\mathrm{i}\kappa_{n}).\label{eq6.16}%
\end{align}
Due to (\ref{LHS}), the left hand side of (\ref{eq6.16}). Noticing that the
last term of the right-hand side of (\ref{eq6.16}) is in $H^{2}\left(
\mathbb{C}^{-}\right)  $, the application of the Riesz projection
$\mathbb{P}_{-}$ to (\ref{eq6.16}) yields%
\begin{equation}
\mathbb{P}_{-}(\overline{Y}+R_{x}Y)+\mathbb{P}_{-}R_{x}-\sum_{n}%
\mathrm{i}c_{x,n}\frac{Y(x,\mathrm{i}\kappa_{n})}{k-\mathrm{i}\kappa_{n}}%
-\sum_{n}\frac{\mathrm{i}c_{x,n}}{k-\mathrm{i}\kappa_{n}}=0,\label{eq6.17}%
\end{equation}
where $Y\left(  x,k\right)  :=y\left(  x,k\right)  -1$. Thus the left Jost
solution $\psi_{-}$ is gone from the picture as expected. By condition 1 of
Hypothesis \ref{hyp1.1}, $Y\in H^{2}$ for any $x\in\mathbb{R}$. Since
$\overline{Y}=\mathbb{J}Y$ and by (\ref{eq4.9}) we have
\begin{equation}
\mathbb{P}_{-}\overline{Y}=\mathbb{P}_{-}\mathbb{J}Y=\mathbb{J}\mathbb{P}%
_{+}Y=\mathbb{J}Y.\label{eq6.18}%
\end{equation}
Observing that for any $f\in H^{2}$%
\begin{equation}
\mathbb{P}_{-}\frac{f(\cdot)}{\cdot-\mathrm{i}\kappa}=\mathbb{P}_{-}%
\frac{f(\cdot)-f\left(  \mathrm{i}\kappa\right)  }{\cdot-\mathrm{i}\kappa
}+\mathbb{P}_{-}\frac{f\left(  \mathrm{i}\kappa\right)  }{\cdot-\mathrm{i}%
\kappa}=\mathbb{P}_{-}\frac{f\left(  \mathrm{i}\kappa\right)  }{\cdot
-\mathrm{i}\kappa},\label{P_}%
\end{equation}
we have by (\ref{P_}) that%
\begin{equation}
\sum_{n}\mathrm{i}c_{x,n}\;\frac{Y(x,\mathrm{i}\kappa_{n})}{\cdot
-\mathrm{i}\kappa_{n}}=\mathbb{P}_{-}\sum_{n}\mathrm{i}c_{x,n}\;\frac
{Y(x,\cdot)}{\cdot-\mathrm{i}\kappa_{n}}.\label{eq6.19}%
\end{equation}
Inserting (\ref{eq6.18}) and (\ref{eq6.19}) into (\ref{eq6.17}), we obtain
\[
\mathbb{J}Y+\mathbb{P}_{-}\left(  R_{x}-\sum_{n}\frac{\mathrm{i}c_{x,n}}%
{\cdot-\mathrm{i}\kappa_{n}}\right)  Y=-\mathbb{P}_{-}\left(  R_{x}-\sum
_{n}\frac{\mathrm{i}c_{x,n}}{\cdot-\mathrm{i}\kappa_{n}}\right)  \text{.}%
\]
Applying $\mathbb{J}$ to both sides of this equation yields%
\begin{equation}
(\mathbb{I}+\mathbb{H}(\varphi))Y=-\mathbb{H}(\varphi)1,\label{eq6.20}%
\end{equation}
where $\mathbb{H}(\varphi)$ is the Hankel operator defined in Definition
\ref{def4.1} with symbol%
\begin{equation}
\varphi\left(  k\right)  =\varphi_{x}(k)=R_{x}(k)-\sum_{n}\frac{\mathrm{i}%
c_{x,n}}{k-\mathrm{i}\kappa_{n}}.\label{fi}%
\end{equation}
Due to (\ref{eq6.14}), $\mathbb{J}\varphi=\overline{\varphi}$ and hence
$\mathbb{H}(\varphi)$ is selfadjoint. Note that $\mathbb{H}(\varphi)1$ on the
right hand side of (\ref{eq6.20}) should be interpreted as%
\[
\mathbb{H}(\varphi)1=\mathbb{P}_{+}\bar{\varphi}\in H^{2}\text{,}%
\]
due to condition 4 of Hypothesis \ref{hyp1.1}.

We show that $\mathbb{I}+\mathbb{H}(\varphi)$ is positive definite and hence
(\ref{eq6.20}) is uniquely solvable for $Y(x,k)$ for any real $x$.

We show first that $\mathbb{H}(\varphi)$ with%
\[
\varphi\left(  k\right)  =\frac{-\mathrm{i}c}{k-\mathrm{i}\kappa}%
\]
is semi-positive definite for any positive $c,\kappa$. To this end, for $f\in
H^{2}$ consider the quadratic form $\langle\mathbb{H}(\varphi)f,f\rangle$. By
(\ref{P_}) we have%
\begin{align*}
\langle\mathbb{H}(\varphi)f,f\rangle &  =\langle\mathbb{J}\mathbb{P}%
_{-}\varphi f,f\rangle\\
&  =f\left(  \mathrm{i}\kappa\right)  \langle\mathbb{J}\frac{-\mathrm{i}%
c}{\cdot-\mathrm{i}\kappa},f\rangle=\mathrm{i}cf\left(  \mathrm{i}%
\kappa\right)  \langle\frac{1}{\cdot+\mathrm{i}\kappa},f\rangle\\
&  =\mathrm{i}cf\left(  \mathrm{i}\kappa\right)  \overline{\langle f,\frac
{1}{\cdot+\mathrm{i}\kappa}\rangle}=2\pi c\left\vert f\left(  \mathrm{i}%
\kappa\right)  \right\vert ^{2}\geq0.
\end{align*}
Here at the last step we used the Cauchy formula. We can now conclude that the
operator%
\[
\mathbb{H}\left(  \sum_{n}\frac{-\mathrm{i}c_{x,n}}{\cdot-\mathrm{i}\kappa
_{n}}\right)  =\sum_{n}\mathbb{H}\left(  \frac{-\mathrm{i}c_{x,n}}%
{\cdot-\mathrm{i}\kappa_{n}}\right)
\]
is semi-positive definite for any $x$.

Thus, the problem now boils down to showing that $\mathbb{I}+\mathbb{H}\left(
R_{x}\right)  $ is positive definite for any $x$. It follows from conditions 2
and 3 of Hypothesis \ref{hyp1.1} that%
\begin{align*}
\sup_{j}\frac{1}{2}|R_{x}(\omega_{j}+0)-R_{x}(\omega_{j}-0)| &  =\sup_{j}%
\frac{1}{2}|\mathrm{e}^{2\mathrm{i}\omega_{j}x}(R(\omega_{j}+0)-R(\omega
_{j}-0))|\\
&  =\sup_{j}\frac{1}{2}|R(\omega_{j}+0)-R(\omega_{j}-0)|<1
\end{align*}
and by Theorem \ref{Power thm}
\[
\sigma_{ess}\left(  \mathbb{H}(R_{x})\right)  =\cup_{j}\left[  -\alpha
_{j},\alpha_{j}\right]  \text{,}%
\]
where
\[
\alpha_{j}=\frac{1}{2}\left\vert R\left(  \omega_{j}+0\right)  -R\left(
\omega_{j}-0\right)  \right\vert \text{.}%
\]
One now concludes that $\lambda=\pm1\notin\sigma_{ess}\left(  \mathbb{H}%
(R_{x})\right)  $ and therefore $\lambda=-1$ could only be an eigenvalue of
finite multiplicity. It remains to show that it is not the case. We proceed by
contradiction \ Assume $\lambda=-1$ is an eigenvalue of finite multiplicity
and let $f\in H^{2}$ be the associated normalized eigenfunction. We then have
by (\ref{eq4.9})
\[
\langle\mathbb{H}(R_{x})f,f\rangle=\langle R_{x}f,\mathbb{P}_{-}%
\mathbb{J}f\rangle=\langle R_{x}f,\mathbb{J}f\rangle
\]
and hence by the Cauchy inequality
\begin{align}
\left\vert \langle\mathbb{H}(R_{x})f,f\rangle\right\vert ^{2} &  \leq
\int_{\mathbb{R}}\left\vert R\left(  k\right)  \right\vert ^{2}\left\vert
f\left(  k\right)  \right\vert ^{2}\mathrm{d}k\;\left\Vert f\right\Vert
_{2}^{2}\label{eig}\\
&  =\int_{\mathbb{R}}\left\vert R\left(  k\right)  \right\vert ^{2}\left\vert
f\left(  k\right)  \right\vert ^{2}\mathrm{d}k\nonumber\\
&  =\int_{S}\left\vert R\left(  k\right)  \right\vert ^{2}\left\vert f\left(
k\right)  \right\vert ^{2}\mathrm{d}k+\int_{\mathbb{R}\setminus S}\left\vert
R\left(  k\right)  \right\vert ^{2}\left\vert f\left(  k\right)  \right\vert
^{2}\mathrm{d}k<\Vert f\Vert_{2}^{2}=1,\nonumber
\end{align}
where $S$ is a set of positive Lebesgue measure where $\left\vert R\right\vert
<1$ a.e. Here we have used the fact that $f\in H^{2}$ and hence cannot vanish
on $S$ . The inequality (\ref{eig}) implies that $\left\vert \lambda
\right\vert <1$ which is a contradiction.

\section{Applications to WvN type potentials \label{apps}}

Theorem \ref{MainThm} applies to a variety of oscillatory potentials decaying
as $O(1/x)$ but without understanding the zero energy behavior of scattering
data results could only be partial. To describe them we review some results on
WvN type potentials following \cite{Hintonetal1991, Klaus91} and adjust them
to our setting.

Consider a continuous potential of the form
\begin{equation}
q\left(  x\right)  =q_{\gamma}\left(  x\right)  +O(x^{-2}),x\rightarrow
\pm\infty,\label{WvN}%
\end{equation}
where $q_{\gamma}\left(  x\right)  \text{{}}$ is given by (\ref{pure WvN}).
Clearly $q$ is square integrable (but not even integrable) and hence the
classical short-range techniques do not apply. Nevertheless this potential
supports two Jost solutions\footnote{It is been known since the 1950s if not
earlier.} $\psi_{\pm}\left(  x,k\right)  \text{{}}$ analytic for
$\operatorname*{Im}k>0$ and continuous up to the real line except for
$k=\pm\omega$ (and possibly $0$) where they blow up to the order of $\gamma$:
\begin{equation}
\psi_{\pm}\left(  x,k\right)  \sim\frac{u_{\pm}\left(  x\right)  }{\left(
k-\omega\right)  ^{\gamma}},k\rightarrow\omega,\operatorname*{Im}%
k\geq0.\label{asympt}%
\end{equation}
Due to the symmetry the same behavior takes place of course at $-\omega$. In
(\ref{asympt}) the branch cut is taken along $\mathbb{R}_{-}$ and $u_{\pm
}\left(  x\right)  $ are solutions to $\mathbb{L}_{q}u=\omega^{2}u$ determined
by the asymptotics
\begin{equation}
u_{\pm}\left(  x\right)  \sim c_{\pm}x^{-\gamma}\cos\omega x,\text{\ \ }%
x\rightarrow\pm\infty,\label{u+_}%
\end{equation}
with some (complex) constants $c_{\pm}$. Thus $\psi_{\pm}\left(  x,k\right)  $
blow up to the order $\gamma$ at $k=\pm\omega$ (and possibly also at $k=0$)
but are continuous elsewhere. For large $|k|$ the behavior is the same as that
of the short-range case $\psi_{\pm}\left(  x,k\right)  =\mathrm{e}%
^{\pm\mathrm{i}k}(1+O(1/k)),\operatorname*{Im}k\geq0$. Note that in the
short-range case $\psi_{\pm}\left(  x,k\right)  $ are continuous on the entire
real line. The points $\pm\omega$ are therefore can be called spectral
singularities. In the literature however the energy $\omega^{2}$ is referred
to as a WvN resonance. Generically, $u_{\pm}(x)$ are linearly independent but
for specially chosen $q$'s they may become linearly dependent meaning that the
Schrodinger equation $\mathbb{L}_{q}u=\omega^{2}u$ has then a solution which
is square integrable if $\gamma>1/2$ and decaying (but not $L^{2}$) if
$\gamma\leq1/2$ (this will be our case). In the former $\omega^{2}$ is a bound
state of $\mathbb{L}_{q}$ embedded into the a.c. spectrum and in the latter
$\omega^{2}$ is called a half-bound state. Note that both bound/half-bound
states are very unstable (while a resonance is) and can be destroyed by an
arbitrarily small perturbation\footnote{The situation is reminiscent of
generic vs exceptional alternative in the short-range case when the transition
$T(k)$ is generically vanishes linearly a $k\rightarrow0$ but exceptional
$q$'s support nontrivial transition at zero momentum. This situation is
extremely unstable and there is no way to tell generic and exceptional
potentias apart. On the bright side, this phenomenon almost never matters.}.
Thus we assume that $W\left(  u_{-},u_{+}\right)  \neq0$.

We now consider the transmission coefficients $T$ and $R$. It immediately
follows from (\ref{asympt}) that%
\[
W\left(  \psi_{-},\psi_{+}\right)  \sim\frac{W\left(  u_{-},u_{+}\right)
}{\left(  k-\omega\right)  ^{2\gamma}},\;k\rightarrow\omega,\operatorname*{Im}%
k\geq0\text{,}%
\]
and by (\ref{T}) we have
\begin{equation}
T\left(  k\right)  =\frac{2\mathrm{i}\omega}{W\left(  u_{-},u_{+}\right)
}\left(  k-\omega\right)  ^{2\gamma}=O\left(  k-\omega\right)  ^{2\gamma
},\;k\rightarrow\omega,\operatorname*{Im}k\geq0.\label{asy}%
\end{equation}
Thus $T(k)$ vanishes to the order of $2\gamma$ as $k\rightarrow\pm\omega$.
Recall that in the short-range case $T(k)$ vanishes only at zero. Turn now to
$R(k)$. For some real neighborhood of $\omega$ (we again use
\cite{Hintonetal1991, Klaus91})
\begin{equation}
\overline{\psi_{+}\left(  x,k\right)  }/\psi_{+}\left(  x,k\right)
\sim\operatorname*{Sgn}A\;\mathrm{e}^{-\mathrm{i}\pi\gamma\operatorname*{Sgn}%
\left(  k-\omega\right)  },k\rightarrow\omega.\label{jump}%
\end{equation}
Dividing (\ref{R basic scatt identity}) by $\psi_{+}$ and taking into account
(\ref{jump}) yields
\begin{align*}
T\left(  k\right)  \psi_{-}\left(  x,k\right)  /\psi_{+}\left(  x,k\right)
&  =\overline{\psi_{+}\left(  x,k\right)  }/\psi_{+}\left(  x,k\right)
+R\left(  k\right)  \\
&  \sim\operatorname*{Sgn}A\;\mathrm{e}^{-\mathrm{i}\pi\gamma
\operatorname*{Sgn}\left(  k-\omega\right)  }+R\left(  k\right)  \text{.}%
\end{align*}
On the other hand, by (\ref{asympt}) and (\ref{asy}) one concludes that
\[
R\left(  k\right)  +\operatorname*{Sgn}A\;\mathrm{e}^{-\mathrm{i}\pi
\gamma\operatorname*{Sgn}\left(  k-\omega\right)  }=O\left(  k-\omega\right)
^{2\gamma},\text{\ \ \ }k\rightarrow\omega\text{,}%
\]
and hence
\begin{equation}
\lim_{k\rightarrow\omega}\mathrm{e}^{\mathrm{i}\pi\gamma\operatorname*{Sgn}%
\left(  k-\omega\right)  }R\left(  k\right)  =-\operatorname*{Sgn}%
A.\label{sign}%
\end{equation}
Note that the subtle fact that (\ref{asympt}) holds as $k\rightarrow\omega$
also along the real line (not only tangentially from $\mathbb{C}^{+}$) was
crucially used as (\ref{R basic scatt identity}) holds in general on the real
line only.

Let us discuss now the decay property of $R(k)$. To this end, we recall the
well-known KdV second conservation law (aka the second Zakharov-Faddeev trace
formula or the Shabat-Zakharov sum rule)
\begin{equation}
\frac{2\pi}{3}\sum_{n}\kappa_{n}^{3}+\int_{\mathbb{R}}k^{2}\log(1-|R\left(
k\right)  |^{2})^{-1}\mathrm{d}k=\frac{\pi}{8}\int_{\mathbb{R}}q\left(
x\right)  ^{2}\mathrm{d}x.\label{trace formula}%
\end{equation}
Since $q\in L^{2}$, (\ref{trace formula}) immediately implies that $R(k)$ must
be at least square integrable (indeed $\log\left(  1-\left\vert R\right\vert
^{2}\right)  ^{-1}\geq\left\vert R\right\vert ^{2}$), which is the same rate
of decay as in the short-range case. Incidentally, due to (\ref{eq6.14}) this
means that $T(k)=1+O(1/k)$ as $k\rightarrow\infty$ which is in agreement with
the short-range scattering.

Thus, the reflection coefficient $R$ is continuous away from $\pm\omega$ (and
possibly $0$), has at $\pm\omega$ jump discontinuity of size $2|\sin\pi
\gamma|$, and decays at $o(1/k)$. Recalling the properties of $R$ in the
short-range case we see that the appearance of discontinuities seems to be the
only difference. However this phenomenon alone makes the machinery of the
inverse scattering break down in a very serious way. As was already mentioned
in the introduction, it remains a good open problem.

Yet another circumstance is a poor understanding of the zero energy behavior
(also stated in \cite{Klaus91} as an open problem). It is shown in
\cite{Klaus82} that $q_{\gamma}$ (a pure NvW type potential) has finite
negative spectrum if $\gamma<\sqrt{1/2}$ but if $\gamma\geq\sqrt{1/2}$ the
negative spectrum (necessarily discrete) is infinite, accumulating to zero.
Recall that in the short-range inverse scattering every (negative) bound state
requires a norming constant. Since those norming constants $c_{n}$ in the KdV
context determine locations of solitons (which can be arbitrary) the hope that
$c_{n}$ may have a pattern of behavior as $n\rightarrow\infty$ is not
justified in general\footnote{Besides, the results of \cite{Yafaev81} suggest
that the behavior of the Jost solution at zero could be quite messy.}. We
believe that there is no hope in trying to adapt the classical inverse
scattering to the setting of infinite negative spectrum in general and instead
a totally different approach is required, which would also handle the rough
behavior of the Jost function at zero (work in progress). In this contribution
we are focus on the effect of positive resonances on the inverse scattering,
which is already very important, and address zero resonance and infinite
negative spectrum elsewhere.

Thus we assume that our $q$ is such that $k =0$ is a regular point of the
spectrum of $\mathbb{L}_{q}$. In other words, $\psi_{ \pm} \left( x ,k\right)
$ are continuous\footnote{In fact, boudedness of $\left\vert k\right\vert
^{1/2 -\varepsilon} \psi_{ \pm} \left( x ,k\right) $ at $k =0$ for some
$\varepsilon>0\text{ would be sufficient.}$} as $k \rightarrow\omega
\operatorname*{\text{in }Im} k \geq0\text{}$. Since the (necessarily
imaginary) zeros of the Jost solution interlace with $\mathrm{i} \kappa_{n}$,
this assumption rules out negative infinite spectrum.

Thus one can now see that if $\gamma<1/2$ then all conditions of Hypothesis
\ref{hyp1.1} are satisfied and we arrive at

\begin{theorem}
\label{Thm on WvN} Let a continuous $q(x)$ be of the form (\ref{WvN}) with
some real $A$ and positive $\omega$ such that $\gamma=\left\vert
A/(4\omega)\right\vert \in\left(  0,1/2\right)  $. Suppose that (a)
$\omega^{2}$ is not a half bound state of $\mathbb{L}_{q}$ and (b) $0$ is a
regular point of $\sigma(\mathbb{L}_{q})$. Then the set $S_{q}=\left\{
R,(\kappa_{n},c_{n})\right\}  $ determines $q$ uniquely from the relation
\begin{equation}
Y\left(  x,\cdot\right)  =-(\mathbb{I}+\mathbb{H}(\varphi_{x}))^{-1}%
\mathbb{H}(\varphi_{x})1.\label{Y}%
\end{equation}
Furthermore, $\omega>0$ is found from the equation $\left\vert R\left(
\omega\right)  \right\vert =1$; $\gamma$ and the sign of $A$ is determined
from (\ref{sign}).
\end{theorem}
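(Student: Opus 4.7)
The plan is to reduce the theorem to Theorem \ref{MainThm} by verifying each of the five conditions of Hypothesis \ref{hyp1.1} for potentials of the form (\ref{WvN}) under the stated assumptions, and then to read off the formula (\ref{Y}) and the recovery of the parameters $\omega$, $\gamma$, $\operatorname{sgn} A$ from the structural information already gathered in Section \ref{apps}. Concretely, I would proceed condition by condition: (4) holds since $q\in L^2$ and the Zakharov--Faddeev trace formula (\ref{trace formula}) forces $R\in L^2$; (5) holds because the regularity of $k=0$ in $\sigma(\mathbb{L}_q)$ together with the fact that zeros of the Jost solution interlace with the $\mathrm{i}\kappa_n$ (as noted just before the theorem) excludes an infinite discrete spectrum, while the standard short-range-like decay of $R$ at infinity keeps it finite; (2) follows from (\ref{sign}) combined with continuity of $R$ away from $\{\pm\omega\}$, which in turn follows from continuity of $\psi_\pm(x,k)$ away from $\{\pm\omega,0\}$ and the assumed regularity at zero; (3) is immediate since the jump of $R$ at $\pm\omega$ has size $2|\sin\pi\gamma|$ and $\gamma\in(0,1/2)$ gives $|\sin\pi\gamma|<1$, uniformly over the (finitely many, in fact just one pair of) singular points.

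Condition (1) is the step requiring the most care and is where I expect the main obstacle. From (\ref{asympt}) we know $\psi_\pm(x,k)$ is analytic in $\operatorname{Im}k>0$ and, as $k\to\pm\omega$ within $\operatorname{Im}k\ge 0$, blows up like $(k\mp\omega)^{-\gamma}$ with $\gamma<1/2$, which is locally square integrable; at infinity the short-range-like expansion $\psi_\pm(x,k)=\mathrm{e}^{\pm\mathrm{i}kx}(1+O(1/k))$ gives $Y_\pm=O(1/k)$. Combined with analyticity of $\psi_\pm$ in the open upper half-plane and standard Paley--Wiener/Fatou considerations, this is what is needed to place $Y_\pm$ in $H^2$; the subtlety is that Hardy membership is a global statement and must be upgraded from the pointwise asymptotics near $\pm\omega$, $0$, and $\infty$. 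One clean way is to write $Y_\pm$ as the sum of an $H^\infty$ analytic piece handling the behavior in $\operatorname{Im}k\ge\delta$ plus a remainder controlled by the explicit $(k\mp\omega)^{-\gamma}$ factors, using $\gamma<1/2$ to get the $L^2$ bound on the boundary.

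Once Hypothesis \ref{hyp1.1} is verified, Theorem \ref{MainThm} gives uniqueness of $q$ from $S_q=\{R,(\kappa_n,c_n)\}$, and the explicit reconstruction formula (\ref{Y}) is exactly (\ref{eq6.20}) solved for $Y$, which is permitted because $\mathbb{I}+\mathbb{H}(\varphi_x)$ was shown in Section \ref{proof} to be positive definite under precisely these hypotheses. The potential $q$ is then recovered from $Y(x,\cdot)$ via the standard $q=-2\partial_x^2\log(\text{suitable expression})$, or more directly by reading off the $1/k$ coefficient of $Y(x,k)$ as $k\to\infty$. For the parameters: (\ref{asy}) yields $T(\pm\omega)=0$, whence by (\ref{eq6.14}) one has $|R(\pm\omega)|=1$, which identifies $\omega$ as the unique positive point where $R$ has modulus one; the jump $|R(\omega+0)-R(\omega-0)|=2|\sin\pi\gamma|$ determines $\gamma\in(0,1/2)$; and (\ref{sign}) determines $\operatorname{sgn}A$ as $-\lim_{k\to\omega}\mathrm{e}^{\mathrm{i}\pi\gamma\operatorname{sgn}(k-\omega)}R(k)$.
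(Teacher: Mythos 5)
Your proposal is correct and follows essentially the same route as the paper: Section \ref{apps} is itself the proof of Theorem \ref{Thm on WvN}, verifying the five conditions of Hypothesis \ref{hyp1.1} exactly as you do (trace formula for $R\in L^2$, interlacing for finiteness of the discrete spectrum, the jump size $2|\sin\pi\gamma|<2$ from (\ref{sign}), and local square integrability of the $(k\mp\omega)^{-\gamma}$ singularity with $\gamma<1/2$ for the $H^2$ condition, which the paper relegates to a footnote) and then invoking Theorem \ref{MainThm} together with (\ref{eq6.20}) and (\ref{sign}) to recover $\omega$, $\gamma$, and $\operatorname{sgn}A$.
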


We did not specify how one can obtain $q$ from (\ref{Y}). The obvious one is
to find $q$ directly from (\ref{SE}). The most common one appears to be by
\begin{equation}
q \left( x\right)  = - \partial_{x}\lim2 \mathrm{i} k (\mathbb{I} +\mathbb{H}
(\varphi_{x}))^{ -1} \mathbb{H} (\varphi_{x}) 1 ,k \rightarrow\infty
,\operatorname*{Im} k \geq0.\label{q}%
\end{equation}
The most general one is
\begin{equation}
q \left( x\right)  =\lim2 k^{2} \left( G \left( x ,k^{2}\right)  -1\right)
,k^{2} \rightarrow\infty,\operatorname*{Im}  k^{2} \geq0 ,\label{G}%
\end{equation}
where
\[
G \left( x ,k^{2}\right)  =\frac{\psi_{ -} \left( x ,k\right)  \psi_{ +}
\left( x ,k\right) }{W \left( \psi_{ +} \left( x ,k\right)  ,\psi_{ -} \left(
x ,k\right) \right) }%
\]
is the diagonal Green's function. Note that (\ref{G}) is not sensitive to
conditions on $q$ once we replace $\psi_{ \pm}$ with Weyl solutions.

\begin{corollary}
Under the conditions of Theorem \ref{Thm on WvN}, the problem (\ref{KdV}%
)-(\ref{KdVID}) has a unique solution given by (\ref{q}) where $\varphi_{x}$
is replaced with
\begin{equation}
\varphi_{x ,t} (k) =R (k) \mathrm{e}^{8 \mathrm{i} k^{3} t +2 \mathrm{i} k x}
-\sum_{n}\frac{\mathrm{i} c_{n}}{k -\mathrm{i} \kappa_{n}} \mathrm{e}^{8
\kappa_{n}^{3} t -2 \kappa_{n} x} .\label{eq9.9}%
\end{equation}

\end{corollary}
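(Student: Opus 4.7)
The strategy is to run the three-step IST parametrized by $t$. Starting from $q(x) = q(x,0)$ satisfying Hypothesis \ref{hyp1.1}, compute $S_q = \{R,(\kappa_n,c_n)\}$ in Step 1. In Step 2, invoke the Lax-pair evolution (\ref{time evol}) to obtain the time-dependent symbol $\varphi_{x,t}$ given by (\ref{eq9.9}). Finally, in Step 3, solve the Hankel equation
\[
(\mathbb{I}+\mathbb{H}(\varphi_{x,t}))Y(x,t,\cdot) = -\mathbb{H}(\varphi_{x,t})1
\]
for $Y(x,t,\cdot)\in H^2$ at each $(x,t)$, and read off $q(x,t)$ from (\ref{q}).

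The first task is to verify that Theorem \ref{Thm on WvN} applies to the time-evolved data at every $t$, so that this construction is well defined. The new factor $e^{8\mathrm{i}k^3 t}$ is continuous on $\mathbb{R}$ with unit modulus, hence $R(k)e^{8\mathrm{i}k^3 t}$ inherits from $R$ both the $L^2$-norm and the jump sizes $2\alpha_j$ at $\pm\omega_j$; conditions (2)--(5) of Hypothesis \ref{hyp1.1} are thus preserved under the flow. Oddness of $k^3$ together with positivity of the $\kappa_n$ give $\mathbb{J}\varphi_{x,t}=\overline{\varphi_{x,t}}$, so $\mathbb{H}(\varphi_{x,t})$ remains selfadjoint. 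The essential-spectrum description from Theorem \ref{Power thm} and the contradiction argument around (\ref{eig}), which uses only that $|R|<1$ on a set of positive Lebesgue measure, then apply verbatim, making $\mathbb{I}+\mathbb{H}(\varphi_{x,t})$ positive definite and hence invertible for every $(x,t)$.

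It remains to check that the resulting $q(x,t)$ genuinely solves (\ref{KdV})--(\ref{KdVID}). The initial condition is immediate from $\varphi_{x,0}=\varphi_x$ together with Theorem \ref{Thm on WvN}. For the PDE, the standard Lax-pair machinery applies: every building block of $\varphi_{x,t}$ (the oscillatory factor $e^{8\mathrm{i}k^3 t+2\mathrm{i}kx}$ and each exponential $e^{8\kappa_n^3 t - 2\kappa_n x}$ accompanying a soliton) obeys the linearized dispersion relation, which upon differentiation of the Hankel equation in $(x,t)$ and extraction of the $|k|\to\infty$ coefficient in (\ref{q}) reproduces the KdV flow. Uniqueness within the class defined by Hypothesis \ref{hyp1.1} is automatic, since any such solution is bijectively determined by its scattering data and the evolution (\ref{time evol}) is deterministic. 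The principal technical obstacle is justifying differentiation of $Y(x,t,\cdot) = -(\mathbb{I}+\mathbb{H}(\varphi_{x,t}))^{-1}\mathbb{H}(\varphi_{x,t})1$ in $(x,t)$ and the passage to the limit in (\ref{q}): this requires uniform control of the resolvent, which is supplied for free by the $(x,t)$-independent spectral-gap bounds furnished by Theorem \ref{Power thm}, together with stationary-phase estimates handling the highly oscillatory factor $e^{8\mathrm{i}k^3 t+2\mathrm{i}kx}$ at large $|k|$.
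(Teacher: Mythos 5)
Your construction of $q(x,t)$ follows essentially the paper's route: evolve the scattering data by the unimodular factor $\mathrm{e}^{8\mathrm{i}k^{3}t}$, note that this preserves the jump sizes, the $L^{2}$ bound and the selfadjointness of the Hankel operator, conclude that $\mathbb{I}+\mathbb{H}(\varphi_{x,t})$ remains boundedly invertible for every $(x,t)$, and read off $q(x,t)$ from (\ref{q}). The paper compresses the verification that this function actually satisfies (\ref{KdV}) into a citation of the Zakharov--Shabat dressing method, which is the same ``Lax-pair machinery'' you invoke, so that part of your argument is aligned with the paper (though your claim that Theorem \ref{Power thm} supplies $(x,t)$-uniform resolvent bounds ``for free'' is optimistic: Power's theorem controls only the essential spectrum, not the distance of isolated eigenvalues to $-1$ as $(x,t)$ varies).

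The genuine gap is in your uniqueness step. You assert that uniqueness is ``automatic, since any such solution is bijectively determined by its scattering data and the evolution (\ref{time evol}) is deterministic.'' That argument is circular in scope: it shows only that the IST pipeline produces at most one function, i.e.\ uniqueness \emph{within the class of solutions that stay in the setting of Hypothesis \ref{hyp1.1} and whose scattering data are known a priori to evolve by (\ref{time evol})}. It does not exclude another solution of the Cauchy problem (\ref{KdV})--(\ref{KdVID}) that leaves this class or for which the evolution law of the scattering data has not been established. Identifying the dressed solution with \emph{the} solution of the Cauchy problem requires a PDE well-posedness input external to scattering theory, and this is exactly what the paper supplies: since $q\in L^{2}$, Bourgain's theorem \cite{Bourgain93} gives global well-posedness of KdV in $L^{2}$, so the constructed $L^{2}$ solution with the correct initial data must coincide with the unique one. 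Without Bourgain's theorem (or an equivalent uniqueness result), your argument proves existence of a solution of the stated form but not the uniqueness asserted in the corollary.
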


\begin{proof}
By the Zakharov-Shabat dressing method \cite{NovikovetalBook}, multiplying
scattering data by $\mathrm{e}^{8 \mathrm{i} k^{3} t}$ implies the time
evolution (the KdV flow) $q \left( x\right)  \rightarrow q \left( x ,t\right)
$ . Thus $q \left( x ,t\right) $ given by (\ref{q}) with $\varphi_{x}$
replaced with $\varphi_{x ,t}$ solves (\ref{KdV}) with initial data $q \in
L^{2}$ . Therefore, by the Bourgain theorem \cite{Bourgain93} $q \left( x
,t\right) $ is the solution to (\ref{KdV})-(\ref{KdVID}).
\end{proof}

The set of potentials in Theorem \ref{Thm on WvN} is not empty but its
description in terms of potentials is likely impossible. Constructing specific
potentials can be done as follows. We take the spectral measure of the free
($q=0$) half-line Schrodinger operator with a Dirichlet boundary condition at
zero and perturb it in a small neighborhood of $\omega^{2}$ to produce a
singularity of order $\gamma<1/2$ at $\omega^{2}$ of the Weyl m-function. This
procedure gives rise a potential (necessarily $L^{2}$ due to
\cite{KillipSimon2008}) on $\mathbb{R}_{+}$ which we continue to
$\mathbb{R}_{-}$ as an even function. The potential constructed this way will
be subject to the conditions of Theorem \ref{Thm on WvN}. We employ a similar
idea in the recent \cite{RyNON21} to construct a potential,%
\[
q\left(  x\right)  =\left\{
\begin{array}
[c]{cc}%
q_{0}\left(  x\right)  \text{,} & x\geq0\\
q_{0}\left(  -x\right)  \text{,} & x<0
\end{array}
\right.  \text{,}%
\]
where%
\[
q_{0}\left(  x\right)  =-2\partial_{x}^{2}\log\left(  1+\rho x-\left(
\rho/2\right)  \sin2x\right)  ,\text{\ \ \ }x\geq0,\rho>0\text{,}%
\]
that has two Jost solutions given by%
\[
\psi_{\pm}\left(  x,k\right)  =\left\{  1\pm\left(  \frac{\mathrm{e}%
^{\pm\mathrm{i}x}}{k+1}-\frac{\mathrm{e}^{\mp\mathrm{i}x}}{k-1}\right)
\frac{\rho\sin x}{1+\rho\left\vert x\right\vert -\left(  \rho/2\right)
\sin2\left\vert x\right\vert }\right\}  \mathrm{e}^{\pm\mathrm{i}kx},\pm
x\geq0.
\]
Apparently, this potential has the asymptotic behavior (\ref{WvN}) but
$\psi_{\pm}\left(  x,k\right)  $ are clearly continuous at $k=0$ (for each $x$
). The potential $q(x)$ is different at $\pm\infty$ from the WvN potential
$q_{\gamma}$ with $\gamma=2$ by $O(x^{-2})$ and its negative spectrum has only
one bound state. A different construction for $\gamma=1/2$ is given in
\cite{NovikovKhenkin84} based on a subtle limiting procedure in the
Gelfand-Levitan-Marchenko equation. This procedure also yields potentials
having a smooth spectral measure at zero and decaying like $q_{\gamma}$ with
$\gamma=1/2$. In fact, more than one resonance point is allowed but the
techniques do not yield a description of this class in terms of $q$'s either.
Since both constructions produce strong WvN potentials (i.e. $\gamma\geq1/2$)
the reader should be convinced that weak WvNs are not any worse. We however
have now a totally different approach (work in progress) based on analyzing
the effect of altering the spectral measure on a small interval $\left(
-\varepsilon,\varepsilon\right)  $ to remove infinitely many (small) negative
bound states and smoothen its behavior at the edge of the a.c. spectrum. We
conjecture that such an alteration results in a $O(x^{-2})$ perturbation for a
broad class of potentials.

Theorem \ref{Thm on WvN} can be extended to a finite sum of potentials of type
(\ref{WvN}) with a new interesting feature. The set of WvN resonances is
merely $\left\{  \omega_{j}^{2}\right\}  $ and each resonance produces a
jump\footnote{At each $\omega_{j}$ the Jost solution exhibits a singularity of
type (\ref{asympt}) but $\psi$ is still in $H^{2}$ as long as $\gamma_{j}<1/2$
.} of $R$ at $\pm\omega_{j}$ of size $\left\vert 2\sin\pi\gamma_{j}\right\vert
$. This puts us in the setting of the recent deep paper \cite{PushYaf2015},
where the spectral and scattering theory for self-adjoint Hankel operators
$\mathbb{H}\left(  \varphi\right)  $ with piecewise continuous $\varphi$ is
developed. Under a mild extra condition, each jump of $\varphi$ gives rise to
an interval of the a.c. spectrum of $\mathbb{H}\left(  \varphi\right)  $ (c.f.
Theorem \ref{Power thm}) the authors construct \emph{wave operators} realizing
unitary equivalence of $\mathbb{H}\left(  \varphi\right)  $ and the orthogonal
sum of simple model\ Hankel operators responsible for each jump similarly to
the famous Faddeev's solution of the \emph{three particle quantum problem}. In
our case it means that the a.c. part of $\mathbb{H}\left(  \varphi
_{x,t}\right)  $ is unitary equivalent to the sum of orthogonal Hankel
operators and each WvN resonance produces a term in the KdV solution. This
situation is very similarly to the nonlinear superposition of solitons and one
should expect an analog of nonlinear superposition for radiation waves (i.e.
propagating to $-\infty$ ) with phase velocities $12\omega_{j}^{2}$ (similarly
to how each negative bound state $-\kappa_{n}^{2}$ produces a soliton with
velocity $4\kappa_{n}^{2}$ ). It is worth noticing that a similar phenomenon
is studied in \cite{Matveev2002} in the context of singular (i.e. with a local
double pole singularity) WvN type potential commonly referred to as positons.
Note that positons do not interact.

Similarly, Theorem \ref{Thm on WvN} can be extended along the lines to a more
general and physically relevant case of a decaying "periodic" structure
modelled by
\[
q \left( x\right)  =p \left( x\right) /x\text{,}%
\]
where $p \left( x\right) $ is a small enough periodic function with
zero-average. The spectral theoretical basis for this is developed in
\cite{Hintonetal1991,Klaus91} where $p$ is expanded into the Fourier series
producing an infinite sum of WvN potentials with Fourier frequencies
$\omega_{j}$ and $\left( \gamma_{j}\right)  \in l^{1}$ , which guarantees
convergences. The KdV flow promises some fascinating dynamics.

If $\gamma=1/2$ then the essential spectrum of $\mathbb{H}\left(  R\right)  $
fills $\left[  -1.1\right]  $ and $\mathbb{I}+\mathbb{H}\left(  R\right)  $ is
no longer boundedly invertible creating a serious problem to our method. It is
interesting to note that this problem occurs only if $\gamma=1/2+n$ for any
natural $n$ but $\psi$ , which still shows the behavior (\ref{asympt}), is no
longer in $H^{2}$ and our approach needs serious modifications. We believe
that instead of invertibility of $\mathbb{I}+\mathbb{H}\left(  R_{x}\right)  $
we need to study invertibility of $\mathbb{T}\left(  T/\overline{T}\right)
+\mathbb{H}\left(  R_{x}\right)  $ where $\mathbb{T}\left(  \varphi\right)  $
is the Toeplitz operator with symbol $\varphi$ .

In the conclusion of this section we mention that condition (a) of Theorem
\ref{Thm on WvN} is actually unnecessary but the arguments become more
complicated. Note that essentially any compactly supported perturbation turns
a half-bound state into a resonance.

\section{Application to a matrix Riemann-Hilbert problem \label{RHP section}}

In this section we apply Theorem \ref{MainThm} to $2 \times2$ matrix
Riemann-Hilbert problem that arrises in the Riemann-Hilbert problem approach
to the IST

As is well-known, one can rewrite the time-evolved
(\ref{R basic scatt identity}) as a meromorphic vector Riemann-Hilbert problem
(see e.g. \cite{GT09}) with the \emph{jump matrix}
\begin{equation}
V =\left(
\begin{array}
[c]{cc}%
1 -\left\vert R\right\vert ^{2} & -\overline{R}_{x ,t}\\
R_{x ,t} & 1
\end{array}
\right)  ,\label{jump matrix}%
\end{equation}
where as before $R_{x ,t} \left( k\right)  =\exp\left( 8 \mathrm{i} k^{3} t +2
\mathrm{i} k x\right) $ with real $x ,t$ . In the language of the
Riemann-Hilbert problem, we can claim that loosely speaking the IST works
smoothly iff
\begin{equation}
V =V_{ -} V_{ +}\label{canonical factor}%
\end{equation}
with a unique choice of $2 \times2$ matrices $V_{ \pm}$ subject to $V_{ \pm}
-I \in H^{2} \left( \mathbb{C}^{ \pm}\right) $ . Such factorization
(\ref{canonical factor}) is referred to as a canonical $L^{2}$ factorization
\cite{ClanceyGohbergBOOK} and is called a matrix Riemann-Hilbert problem in
the IST community. Note that if $1 -\left\vert R \left( k\right) \right\vert
^{2} >0$ then $\operatorname*{Re} V$ is positive definite and
(\ref{canonical factor}) holds \cite{ClanceyGohbergBOOK}. Such situation
occurs in the modified KdV case \cite{DeiftZhou1993} but not in the KdV case
as generically $\left\vert R \left( 0\right) \right\vert =1$ even in the
short-range case. In our case $1 -\left\vert R \left( \omega_{j}\right)
\right\vert ^{2} =0$ . However, we still have (\ref{canonical factor}). More specifically

\begin{theorem}
\label{factorization thm} Let a reflection coefficient $R$ in (\ref{jump}) be
such $\left\vert R \left( k\right) \right\vert \leq1$ , and $\left\vert R
\left( k\right) \right\vert <1$ on a set $S$ of positive measure. Suppose that
$R$ is piecewise continuous on the closed $\mathbb{R}$ away from some points
$\left\{  \pm\omega_{j}\right\} $ (may be infinitely many). Then the matrix
Riemann-Hilbert problem
\begin{equation}%
\begin{array}
[c]{c}%
V \left( k\right)  =V_{ -} \left( k\right)  V_{ +} \left( k\right)  ,k
\in\mathbb{R}\text{,}\\
V_{ \pm} -I \in H^{2} \left( \mathbb{C}^{ \pm}\right) \text{,}%
\end{array}
\label{matrix RHP}%
\end{equation}
has a unique solution iff
\[
\sup_{j}\frac{1}{2} \left\vert R \left( \omega_{j} +0\right)  -R \left(
\omega_{j} -0\right) \right\vert <1.
\]

\end{theorem}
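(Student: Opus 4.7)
The plan is to translate the matrix factorization (\ref{matrix RHP}) into a scalar Hankel equation of the form (\ref{eq6.20}) and then invoke the same invertibility analysis of $\mathbb{I}+\mathbb{H}(R_{x,t})$ used in the proof of the Main Theorem. First I would reduce (\ref{matrix RHP}) to a scalar Riemann--Hilbert problem: since $\det V = 1$, writing $V_+ = V_-^{-1} V$ column by column and applying the Riesz projections (\ref{eq3.1}) along the lines of (\ref{eq6.16})--(\ref{eq6.20}) eliminates one scalar unknown and leads to the single equation
\[
(\mathbb{I} + \mathbb{H}(R_{x,t})) Y = -\mathbb{H}(R_{x,t})\,1, \qquad Y \in H^2,
\]
identical in structure to (\ref{eq6.20}) but without the bound-state sum (no discrete data enter the hypotheses of Theorem \ref{factorization thm}). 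Thus unique solvability of (\ref{matrix RHP}) in the class $V_\pm - I \in H^2(\mathbb{C}^\pm)$ is equivalent to bounded invertibility of $\mathbb{I} + \mathbb{H}(R_{x,t})$ on $H^2$.

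For the sufficiency direction, assume $\sup_j \alpha_j < 1$ with $\alpha_j = \tfrac{1}{2}|R(\omega_j+0) - R(\omega_j-0)|$. Since $|R_{x,t}(\omega_j \pm 0)| = |R(\omega_j \pm 0)|$, Theorem \ref{Power thm} gives
\[
\sigma_{ess}(\mathbb{H}(R_{x,t})) = \bigcup_{j}[-\alpha_j, \alpha_j] \subset (-1,1),
\]
so $-1$ lies outside the essential spectrum and can at worst be an eigenvalue of finite multiplicity. To exclude $-1$ as an eigenvalue I would reuse verbatim the Cauchy--Schwarz calculation (\ref{eig}), relying on $|R| \leq 1$ together with $|R| < 1$ on a set $S$ of positive measure and on the elementary fact that $f \in H^2$ cannot vanish on $S$. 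Hence $\mathbb{I} + \mathbb{H}(R_{x,t})$ is boundedly invertible, the scalar equation has a unique solution $Y \in H^2$, and the factorization is reconstructed uniquely.

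For the converse, if $\sup_j \alpha_j \geq 1$ then Theorem \ref{Power thm} places $-1$ inside $\sigma_{ess}(\mathbb{H}(R_{x,t}))$, so $\mathbb{I} + \mathbb{H}(R_{x,t})$ is no longer Fredholm and cannot be boundedly invertible; the equivalence established by the reduction above then rules out a unique canonical $H^2$-factorization. I expect the principal technical obstacle to be the reduction itself: one must verify that the scalar elimination is a bijection in the correct function classes and that $H^2(\mathbb{C}^\pm)$-regularity of the scalar solution $Y$ genuinely propagates back to $V_\pm - I \in H^2(\mathbb{C}^\pm)$, including a clean treatment of the discontinuity points $\pm\omega_j$ where $R$ jumps and $V$ is not even continuous.
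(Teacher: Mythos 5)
Your spectral analysis (invariance of the jump sizes under multiplication by the unimodular factor $\mathrm{e}^{8\mathrm{i}k^{3}t+2\mathrm{i}kx}$, Power's theorem locating $\sigma_{ess}(\mathbb{H}(R_{x,t}))$, and the Cauchy--Schwarz exclusion of $-1$ as an eigenvalue using $|R|<1$ on $S$) coincides with what the paper uses. The gap is in the step you yourself flag as the ``principal technical obstacle'': the reduction of the matrix factorization $V=V_{-}V_{+}$ to the scalar equation $(\mathbb{I}+\mathbb{H}(R_{x,t}))Y=-\mathbb{H}(R_{x,t})1$. This is not analogous to the derivation of (\ref{eq6.20}), because there the starting point was the \emph{additive} scattering identity (\ref{eq6.15}), to which the Riesz projections apply linearly. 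The problem (\ref{matrix RHP}) is \emph{multiplicative}: writing $V_{\pm}=I+W_{\pm}$ gives $V=I+W_{-}+W_{+}+W_{-}W_{+}$, and the cross term $W_{-}W_{+}$ does not split under $\mathbb{P}_{\pm}$; likewise ``$V_{+}=V_{-}^{-1}V$ column by column'' presupposes control of the unknown $V_{-}^{-1}$. Existence and uniqueness of a canonical $L^{2}$ factorization is governed by the partial indices of $V$, and establishing the equivalence with a condition on the Hankel operator for Hermitian jump matrices of the form (\ref{jump matrix}) is precisely the content of the Litvinchuk--Spitkovskii theorem \cite{LitSpit82}, which the paper invokes rather than proves. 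Without that result (or an independent proof of it), your argument establishes only the Hankel-operator statement, not the theorem.

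A second, smaller issue: you assert that unique solvability of (\ref{matrix RHP}) is equivalent to \emph{bounded invertibility} of $\mathbb{I}+\mathbb{H}(R_{x,t})$, whereas the criterion the paper takes from \cite{LitSpit82} is $-1\notin\sigma_{ess}(\mathbb{H}(R_{x,t}))$. These differ when $-1$ is an isolated eigenvalue of finite multiplicity. Under the standing hypotheses ($|R|\leq 1$ and $|R|<1$ on a set of positive measure) the two conditions happen to coincide, since your Cauchy--Schwarz argument rules out the eigenvalue case; but the equivalence you would need to prove in the reduction step is the essential-spectrum version, not the invertibility version, so the precise form of the claimed equivalence matters and cannot be asserted without justification.
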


\begin{proof}
Our arguments are based on an important result of \cite{LitSpit82}, that being
adjusted to our situation, says that the problem (\ref{matrix RHP}) has unique
solution iff $-1\notin\sigma_{ess}\left(  \mathbb{H}\left(  R_{x,t}\right)
\right)  $. But as we have shown in the proof of Theorem \ref{MainThm} it is
always the case under conditions 2 as long as 3 of Hypothesis \ref{hyp1.1} is satisfied.
\end{proof}

In the conclusion of the paper we would like to emphasize that the
Riemann-Hilbert problem statement of the IST is extremely powerful tool in the
asymptotic analysis of solutions to integrable systems (see
\cite{DeiftZhou1993} in the mKdV case and \cite{GT09} in the KdV case).
However, it does not have that edge in the circle of problems we are concerned
with and our Hankel operator approach works well instead.

\section{Acknowledgment}

The first author acknowledges the support provided by CONACYT, Mexico via the
project Ciencia de Frontera FORDECYT-PRONACES/61517/2020 and the work was
performed in part at the Regional mathematica center of the Southern Federal
University with the support of the Ministry of Science and Higher Education of
Russia, agreement 075-02-2021-1386. 

The second author acknowledges partial support from NSF under grant DMS 1716975.

The authors are grateful to Alexander Pushnitsky for drawing our attention to
\cite{PushYaf2015} which resulted in this work. We are also thankful to Ilya
Spitkovsky for numerous discussions leading to Section \ref{RHP section}. And
last but the least, we are grateful to the referees for numerous comments and
questions leading to a substantial improvement of the paper.

\end{document}